\documentclass[lettersize,journal]{IEEEtran}
\usepackage{amsmath,amsfonts}
\usepackage{algorithmic}
\usepackage{algorithm}
\usepackage{array}
\usepackage[caption=false,font=normalsize,labelfont=sf,textfont=sf]{subfig}
\usepackage{textcomp}
\usepackage{stfloats}
\usepackage{url}
\usepackage{verbatim}
\usepackage{graphicx}
\usepackage{cite}
\usepackage{amsthm}  
\usepackage{amsmath} 
\usepackage{siunitx}
\DeclareSIUnit{\degreeCA}{\SIUnitSymbolDegree\kern.1em CA}
\DeclareSIUnit\bar{bar}

\newtheorem{theorem}{Theorem}

\newtheorem{remark}{Remark}

\begin{document}

\title{Nonlinear Stochastic Model Predictive Control with Generative Uncertainty in Homogeneous Charge Compression Ignition 
}

\author{Xu Chen, Kevin Kluge, Maximilian Basler, Lorenz D\"{o}rschel, Heike Vallery
\thanks{Xu Chen, Kevin Kluge, Maximilian Basler, Lorenz D\"{o}rschel and Heike Vallery are with Institute of Automatic Control, RWTH Aachen University, Campus-Boulevard 30, 52074 Aachen (e-mail: x.chen@irt.rwth-aachen.de; k.kluge@irt.rwth-aachen.de; m.basler@irt.rwth-aachen.de; l.doerschel@irt.rwth-aachen.de; h.vallery@irt.rwth-aachen.de).} 
}

\maketitle

\begin{abstract}
This work addresses the challenge of ignition timing and load control in homogeneous charge compression ignition engines operating subject to uncertainty from complex combustion dynamics and external disturbances. To handle this issue, we propose a nonlinear stochastic model predictive control approach explicitly incorporating distributional information of uncertainties. Specifically, we integrate an uncertainty model learned from empirical residual data to capture realistic probabilistic characteristics and handle the nonlinear additive uncertainty propagation within the prediction horizon based on polynomial chaos expansion. Additionally, we introduce a novel cost function based on maximum mean discrepancy, enabling direct penalization of the discrepancy between predicted and desired distributions of combustion indicators. The simulation results demonstrate that our proposed method achieves over a \SI{28}{\percent} reduction on combustion phasing variation and more than a \SI{26}{\percent} improvement in load tracking accuracy compared to traditional nonlinear and Gaussian-based predictive control strategies. These findings indicate the effectiveness of explicitly modeling uncertainty distributions and highlight the advantages of distribution-level performance index in robust combustion control. 


\end{abstract}

\begin{IEEEkeywords}
Model predictive control, stochastic model predictive control, homogeneous charge compression ignition, generative model, polynomial chaos expansion,  maximum mean discrepancy.
\end{IEEEkeywords}

\section{Introduction}
\IEEEPARstart{H}{omogeneous } Charge Compression Ignition (HCCI) and its controlled variant, Gasoline Controlled Auto-Ignition (GCAI), are promising combustion modes that offer high thermal efficiency and reduce nitrogen oxide and particulate emissions \cite{zhao2007hcci}. Unlike conventional spark-ignition combustion, HCCI relies on the auto-ignition of a premixed fuel-air mixture without a direct ignition trigger. As a result, HCCI combustion is highly sensitive to the initial thermodynamic state, gas composition, and operating conditions \cite{yao2009progress}. \par
To facilitate stable auto-ignition in gasoline engines, Exhaust Gas Recirculation (EGR) serves as a common approach by increasing the specific heat capacity of the mixture. However, EGR also induces a strong cyclic coupling effect due to residual gas retention \cite{lu2005fundamental}. This coupling gives rise to cycle-to-cycle variability in combustion phasing, adversely affecting the combustion stability and efficiency. In particular, under low-load operation, auto-ignition becomes highly susceptible to stochastic uncertainties in residual gas composition and temperature \cite{saxena2013fundamental}. These challenges motivate the development of advanced control algorithms capable of mitigating the impact of uncertainties in the in-cylinder states. \par
In the field of HCCI combustion control, Model Predictive Control (MPC) has attracted considerable attention owing to its strengths in explicit constraint handling, multivariable control, and flexible cost function design \cite{qin1997overview}. Early work \cite{ravi2012model} demonstrated closed-loop multi-input control using a linearized model. However, such simplifications failed to capture the inherent nonlinearity and stochastic nature of HCCI. Subsequent studies extended MPC to multi-cylinder engines \cite{erlien2012multi, erlien2013multicylinder}, addressed combustion mode switching \cite{widd2011control}, and incorporated wall thermal dynamics via linear parameter-varying models \cite{widd2011physics}, to improve control accuracy and adaptability across varying operating conditions. Efforts to reduce computational burden included Laguerre-based model simplification \cite{ebrahimi2015model} and its real-time hardware implementation \cite{ebrahimi2018real}. More recently, data-driven models such as LSTM networks \cite{gordon2024introducing} and extreme learning machines \cite{janakiraman2015nonlinear} have been employed as prediction models to better capture system behavior, but they often lack robustness guarantees. Physics-based nonlinear models, such as those in \cite{nuss2019nonlinear, de2022nonlinear}, offer improved fidelity but are computationally intensive and rarely account for uncertainty. Although widely applied, most existing MPC strategies for HCCI do not explicitly incorporate uncertainty within the prediction horizon, which limits their robustness and reliability under real-world disturbances.\par

To address the limitations of nominal MPC, which often results in aggressive control actions or constraint violations in the presence of uncertainty, researchers have developed several extended frameworks. Robust MPC improves closed-loop robustness by accounting for worst-case uncertainty, but this often leads to overly conservative control strategies that degrade overall performance \cite{bemporad2007robust}. As an alternative, Stochastic MPC (SMPC) offers a principled balance between performance and robustness by incorporating probabilistic descriptions of uncertainty \cite{heirung2018stochastic}. Typically, SMPC handles constraints by formulating chance constraints, which require satisfaction with high probability rather than in a deterministic manner \cite{mesbah2016stochastic}. Such a framework has shown particular effectiveness in applications involving statistically characterizable uncertainty, including building climate control and energy management systems \cite{oldewurtel2013stochastic, hans2015scenario, zhang2018stochastic}. \par
In the context of nonlinear SMPC, modeling uncertainty and its propagation remains a great challenge. To tackle this issue, sampling-based approaches have been widely explored. Sequential Monte Carlo methods sample control sequences and propagate forward rollouts to predict future system behavior \cite{kantas2009sequential}, while Markov Chain Monte Carlo-based approaches construct a Markov chain over the control input space whose stationary distribution concentrates around regions with low costs \cite{maciejowski2007nmpc}. Other methods, such as \cite{weissel2009stochastic}, use Gaussian Mixture Models to approximate transition densities through repeated propagation and refitting. Similarly, Gaussian Processes have been employed to capture both system dynamics and predictive uncertainty \cite{bradford2018stochastic}.  \par
While these methods improve the tractability of nonlinear SMPC problems, they are often computationally expensive and difficult to generalize or integrate into real-time applications. To address this, some works attempt to simplify the problem through local linearization. For instance, \cite{atmaram2020successive} used successive linearization to generate linear sub-models, and \cite{gronauer2024reinforcement} assumed Gaussian uncertainty with diagonal covariance for tractability. Although such approximations accelerate optimization, they may fail to capture the full complexity of nonlinear uncertainty. Other works, such as \cite{lu2020soft}, address distributional robustness via ambiguity sets but primarily focus on linear systems. In contrast to these nonparametric or approximate methods, parametric modeling of uncertainty offers high accuracy, computational efficiency and integration flexibility. In particular, learning-based generative models provide a promising direction for efficiently capturing complex, high-dimensional uncertainty distributions \cite{ruthotto2021introduction}. However, their integration into SMPC frameworks, especially for tractable uncertainty propagation, remains relatively underexplored. \par

Among various uncertainty propagation techniques, Polynomial Chaos Expansion (PCE) has gained attention as a data-efficient and analytically tractable method for SMPC \cite{paulson2019efficient}. It represents random variables using a series of orthogonal polynomial basis functions, facilitating efficient computation and analytical approximation of output distributions \cite{shen2020polynomial}. Several studies have applied PCE to linear systems under uncertainty. For instance, \cite{ma2023recursive} addressed a discrete-time linear system with time-invariant parametric uncertainty and employed Galerkin projection to derive deterministic dynamics for the PCE coefficients. Similarly, \cite{bhusal2022stochastic} reformulated the dynamics of a multi-agent system within the PCE coefficient space, enabling efficient state propagation under known stochastic inputs. To address nonlinear SMPC, \cite{mesbah2014stochastic} and \cite{fagiano2012nonlinear} have explored non-intrusive methods, where uncertainty was propagated through the nonlinear dynamics using sampling-based polynomial regression. This approach retains flexibility and avoids altering the underlying model equations. In \cite{paulson2019efficient}, the authors extended PCE to continuous-time nonlinear systems by propagating the conditional mean and covariance through a linearized model. Building on this, \cite{paulson2018nonlinear} addressed additive uncertainty using a degenerate distribution assumption and introduced an analytical PCE method for computing constraint backoffs to ensure probabilistic constraint satisfaction. While prior PCE-based methods have shown effectiveness, particularly in handling multiplicative uncertainty, their extension to nonlinear systems subject to additive uncertainty remains limited. In SMPC formulations for such systems, additive disturbance enters each prediction step, resulting in a growing number of random inputs across the prediction horizon. This increasing dimensionality challenges the scalability of existing PCE techniques and hinders their use in predictive uncertainty propagation. \par

To address the challenges of uncertainty modeling and propagation in nonlinear SMPC, we propose a Generative PCE-MMD-based SMPC (GEM-SMPC) framework with the following three key contributions: First, we develop a parametric uncertainty model trained on measurement data, which captures the underlying conditional distribution of uncertainties and provides a structured uncertainty representation within the control formulation. 
Second, we adopt the PCE approach to achieve tractable and efficient uncertainty propagation across the prediction horizon in the presence of nonlinear dynamics and additive uncertainty. Furthermore, we present an approximation strategy that preserves scalability to mitigate the dimensionality challenge induced by additive uncertainty at each prediction step. Third, instead of a conventional quadratic cost function, we propose a novel Maximum Mean Discrepancy (MMD)-based objective that penalizes the discrepancy between the predicted and target state distributions. This formulation allows for a more expressive evaluation of closed-loop performance by incorporating higher-order statistical characteristics of the distributions. \par
The remainder of this work is organized as follows: Section 2 introduces the structure and training framework of the uncertainty model and presents the reformulation of the SMPC problem under additive uncertainty and nonlinear dynamics. It includes the PCE-based state propagation, control policy parameterization, chance constraints reformulation, and the definition of the performance index. Section 3 demonstrates the application of the proposed GEM-SMPC framework to HCCI combustion control and provides comparisons with several baseline MPC strategies. The evaluation focuses on combustion efficiency and load tracking accuracy as key performance metrics. Section 4 concludes with key takeaways and a brief discussion of future research work.

\section{Methodology}
In this section, we formulate the nonlinear SMPC problem and organize the discussion of the GEM-SMPC framework into three parts. The first part focuses on uncertainty modeling with a parameterized generative model to map samples from the trivial distribution to the target uncertainty distribution. The second part introduces a PCE-based approach for uncertainty propagation under nonlinear dynamics. The third part presents an MMD-based objective function to quantify the divergence between the sample distribution and the target reference.
\subsection{Problem Formulation}
We first introduce the system model assumption and define the corresponding SMPC problem. The system under consideration is a discrete-time nonlinear system subject to additive nonlinear uncertainty:
\begin{equation} \label{nonlinear_ss}
    x_{k+1} = f(x_k, u_k )+g(x_k, w_k),
\end{equation}
where $x \in \mathbb{R}^{n_x} $ is the system state, $u \in \mathbb{R}^{n_u} $ is the control input, and $w \in \mathbb{R}^{n_w} $ is the external disturbance. We introduce the disturbance $w$ as a latent variable with a known distribution. Specifically, we assume that $w$ follows a standard normal distribution, i.e. $w \sim \mathcal{N}\left ( 0,I_{n_w} \right )  $. The resulting uncertainty in \eqref{nonlinear_ss}, referred to as the residual, is defined as:
\begin{equation} \label{residual}
    y_k = g(x_k, w_k),
\end{equation}
where uncertainty $y$ might arise from unmodeled dynamics, inherent stochasticity of the process, and noise. In addition, we assume that the state constraints lie within a compact set $\mathcal{X}:=  \left \{ x: G_x x\leq g_x \right \} $, and the input constraints are within a compact set $\mathcal{U}:=  \left \{ u: G_u u\leq g_u \right \} $. \par
Based on this model, we formulate the finite-horizon discrete-time optimal control problem at time step $k$ as follows: 
\begin{equation} \label{smpc}
\begin{aligned}
     \min_{\mathbf{u}_k} \quad & J(x_k, \mathbf{u}_k) \\
\text{s. t.} \quad 
& x_{0|k} = x_k, \\
& {x}_{i+1|k} = f(x_{i|k}, u_{i|k})+g(x_{i|k}, w_{i|k}), \ \forall i = 0:N-1, \\
& P(x_{i|k}\in \mathcal{X} )\geq 1-\epsilon_1, \quad \forall i = 1: N, \\
& P(u_{i|k} \in \mathcal{U} )\geq 1-\epsilon_2, \quad \forall i = 0: N-1, \\
& w_{i|k}\sim \mathcal{N}(0, I_{n_w}) , \quad \forall i = 0: N-1, \\
\end{aligned}
\end{equation}
where $J$ denotes the objective function, which we specify later, $N$ is the length of the prediction horizon, and $ \mathbf{u}_k :=\left [ u_{0|k}^\top, \dots, u_{N-1|k}^\top  \right ] ^\top $ is a sequence of control inputs. The parameters $\epsilon_1$ and $\epsilon_2 \in [0,1) $ represent the permissible probabilities of constraint violations for states and inputs. Furthermore, the latent random variable $w$ is assumed to be independently and identically distributed, and the interaction between the disturbance and the system state is governed by the nonlinear function $g$. 

\subsection{Uncertainty Modeling}
This subsection describes the construction of the uncertainty model. The deterministic mapping
$g:\mathbb{R}^{n_x}\times  \mathbb{R}^{n_w}\to \mathbb{R}^{n_x}$ in \eqref{residual} defines the conditional probability $P\left ( Y |X,W \right )$ as a Dirac measure. Our goal is to identify the mapping from the latent variable $w$ to the uncertainty $y$, conditioned on the current system state $x$. Throughout this section, capital letters denote random variables, and lowercase letters refer to specific realizations.\par
To train a uncertainty model introduced above, we aim to minimize the discrepancy between the real residual distribution and the one generated by the model. In this work, we employ the Wasserstein distance to quantify the divergence between these two distributions \cite{panaretos2019statistical}. The $1$-Wasserstein distance takes the form:
\begin{equation} \label{W-distance}
 W_1(P_q,P_{g}) =\inf_{\Gamma \in \mathcal{P}(\mathbb{R}^{n_x}\times \mathbb{R}^{n_x} ) }  \int \mathrm {c}(\nu _1,\nu_2)\Gamma(\mathrm {d}\nu _1,  \mathrm {d}\nu _2  )    ,
\end{equation}
where $\mathcal{P}$ denotes the set of Borel probability measures on $\mathbb{R}^{n_x} \times \mathbb{R}^{n_x}$, such that each measure has marginal distributions $P_q$ and $P_g$, and the associated random variables are $1$-integrable. Furthermore, $\mathrm {c}$ represents the distance metric, whose common choice is the Euclidean distance. It should be clarified that $P_q$ denotes the distribution of the real residuals, which is the conditional distribution given the observed state $x$, and its probability density function is $p_y(y| x)$.  Meanwhile, $P_g$ represents the distribution of the generated residuals, which is also the conditional distribution given the observed state $x$, and we write its factorization as:
\begin{equation} \label{factorization_y}
p_g(\hat y |  x) = \int_{\mathbb{R}^{n_z} }p_g(\hat y | w, x) p _w(w)\mathrm {d} w  ,
\end{equation}
where we assume that the introduced latent variables $W$ and the observed state $X$ are independent. The process mapping from $W$ and $X$ to $\hat Y$ constitutes the generative process, and the corresponding conditional probability defines the decoder conditional on states. \par
Due to the difficulty in directly obtaining the posterior distribution, namely the distribution of the latent variable $W$ conditioned on the data $Y$ in generative models, we introduce an encoder $Q(W | Y)$ to construct an approximate posterior, and we require the marginal probability constraint to hold, meaning that $Q_w(W):=\mathbb{E}_{P_y}\left [ Q(W | Y) \right ]  $ needs to be equal to the latent prior distribution $P_w$. Here, $P_y$ is the marginal distribution of residuals. \par
Next, we introduce the following theorem, which provides an equivalent formulation of the Wasserstein loss in \eqref{W-distance}. The proof follows the approach outlined in \cite{tolstikhin2017wasserstein}, while extending it to the conditional probability setting. 
\begin{theorem} \label{w-equiv}
Consider the deterministic decoder $P_g(\hat Y| W, X)$ with the function $g$ in \eqref{residual}. The $1$-Wasserstein distance in \eqref{W-distance} is equivalent to 
\begin{equation} \label{w-loss}
W_1 (P_q,P_g) = \inf_{ Q\in \left \{ Q\mid Q_w = P_w \right \}  } \ \mathbb{E}_{P_q} \mathbb{E}_{Q(W|Y)} \left [ c\left (  Y,g(W,X)\right )  \right ],
\end{equation}
where $Q$ denotes the encoder distribution, whose marginal distribution over $W$ is $Q_w$, and $P_w$ is the introduced prior distribution.
\end{theorem}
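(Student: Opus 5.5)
We follow the argument of Tolstikhin et al.\ for Wasserstein auto-encoders, carried out conditionally on the observed state $x$. Fix $x$ and write $P_g$ for the push-forward of $P_w$ under the deterministic map $w\mapsto g(x,w)$. We prove the equality by two inequalities between the Kantorovich infimum in \eqref{W-distance} and the encoder-constrained infimum in \eqref{w-loss}. The key device is an intermediate coupling set $\mathcal{P}_{X,Y,\hat Y}$ of joint laws of $(Y,W,\hat Y)$ such that $Y\sim P_q$, $W\sim P_w$ and $\hat Y=g(x,W)$ almost surely. Every such joint law induces, through its $(Y,\hat Y)$-marginal, a coupling of $P_q$ and $P_g$.

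\textbf{Step 1 ($\le$).} Take any encoder $Q(W\mid Y)$ with $Q_w=P_w$. Define the joint law of $(Y,W)$ by $P_q(\mathrm{d}y)\,Q(\mathrm{d}w\mid y)$ and set $\hat Y=g(x,W)$. By the marginal constraint, $W\sim P_w$, so $\hat Y\sim P_g$. The law of $(Y,\hat Y)$ is therefore an admissible $\Gamma$ in \eqref{W-distance}, and its transport cost is exactly $\mathbb{E}_{P_q}\mathbb{E}_{Q(W|Y)}[c(Y,g(W,X))]$. Taking the infimum over $Q$ gives $W_1\le$ the right-hand side of \eqref{w-loss}.

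\textbf{Step 2 ($\ge$).} Take any coupling $\Gamma$ of $P_q$ and $P_g$. We must lift it to a joint law of $(Y,W,\hat Y)$ with $\hat Y=g(x,W)$. To do this, disintegrate $P_w$ along $g(x,\cdot)$, giving conditional laws $K(\mathrm{d}w\mid\hat y)$ on the fibres $\{w: g(x,w)=\hat y\}$. Then define the joint law $\Gamma(\mathrm{d}y,\mathrm{d}\hat y)\,K(\mathrm{d}w\mid\hat y)$. Its $W$-marginal is $\int K(\cdot\mid\hat y)\,P_g(\mathrm{d}\hat y)=P_w$, and $\hat Y=g(x,W)$ holds almost surely. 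Conditioning on $Y$ yields an encoder $Q(W\mid Y)$ with $Q_w=P_w$. Its cost equals $\int c(y,\hat y)\,\Gamma$, because $c(Y,g(x,W))=c(Y,\hat Y)$ almost surely. Hence the right-hand side of \eqref{w-loss} is at most $\int c\,\mathrm{d}\Gamma$ for every $\Gamma$, which gives $\ge$.

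\textbf{Main obstacle.} The delicate step is the measure-theoretic gluing in Step 2. It requires the existence of regular conditional distributions, meaning the disintegration of $P_w$ over the fibres of $g(x,\cdot)$ and of the constructed joint law with respect to $Y$. We will justify this with the disintegration theorem on Polish spaces, using that all spaces are Euclidean and $g$ is measurable. The conditioning on $x$ needs no extra work. Because $W$ is independent of $X$, the same construction applies for each fixed $x$. Measurability in $x$ follows from measurable selection of the kernels, which yields the conditional statement uniformly.
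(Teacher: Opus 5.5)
Your proof is correct and follows essentially the same route as the paper, which (adapting Tolstikhin et al.) identifies the couplings of $P_q$ and $P_g$ with the $(Y,\hat Y)$-marginals of joint laws of $(Y,\hat Y,W)$, and then collapses the deterministic decoder to reach the encoder form. Your Step 2 is more careful than the paper's version. The paper supports the inclusion $\mathcal{P}_{Y,\hat Y|X}\subseteq\mathcal{P}_{(Y,\hat Y)|X}$ only by appeal to the ``generality of the encoder'' and a formal density factorization. You instead supply the missing gluing step explicitly, by disintegrating $P_w$ along the fibres of $g(x,\cdot)$ and conditioning on $Y$. Your closing remark on measurability in $x$ is not needed, since the statement holds pointwise for each fixed $x$.
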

\begin{proof}
First, we define the $1$-Wasserstein distance with respect to the distributions of $Y$, $\hat Y$ and $W$ conditioned on $X$ of the form:
\begin{equation}  \label{special_w_dis}
    W^\ast_1(P_q,P_g) = \inf_{\Gamma \in \mathcal{P}_{Y,\hat Y, W|X} }\mathbb{E}_{\Gamma} \left [ c(Y, \hat Y) \right ], 
\end{equation}
where $ \mathcal{P}_{Y,\hat Y, W|X}$ denotes the set of measures, subject to the constraint that their marginals correspond to the distributions of $Y$, $\hat Y$ and $W$ conditioned on $X$, respectively. \par
Then, we prove that the Wasserstein distance in \eqref{special_w_dis} is identical to the distance in \eqref{W-distance} when the decoder is the deterministic function $g$ in \eqref{residual}. We can rewrite the Wasserstein distance in \eqref{special_w_dis} with respect to the set of marginals of $\Gamma \in \mathcal{P}_{Y,\hat Y, W|X}$ on $Y$ and $\hat Y$ and denote this joint distribution set as $\mathcal{P}_{(Y,\hat Y)|X}$. It is evident that, given the marginal distributions of $Y$ and $\hat Y$, the set $\mathcal{P}_{(Y,\hat Y)|X} \subseteq \mathcal{P}_{Y,\hat Y|X}$, as the former set is subject to additional constraints on the joint distribution. On the other hand, from the topology of the graph, $\hat Y$ and $Y$ are conditionally independent given $W$ and $X$. Under a fixed mapping $g$ and a state $x$, the variable $W$ induces the marginal distribution of $\hat Y$. Consequently, the conditional distribution $P(W|Y)$ determines the coupling between $\hat Y$ and $Y$. Due to the generality of the encoder in representing any valid $P(W|Y)$, we can conclude that $\mathcal{P}_{Y,\hat Y|X} \subseteq \mathcal{P}_{(Y,\hat Y)|X}$, and furthermore $ W_1(P_q,P_g) =W^\ast_1(P_q,P_g)$. Next, we can derive the following result: 
\begin{equation*}
   \begin{aligned}
W_1& (P_q,P_g) = \inf_{\Gamma \in \mathcal{P}_{Y,\hat Y, W|X} }\mathbb{E}_{\Gamma} \left [ c(Y, \hat Y) \right ] \\
=& \inf_{\Gamma \in \mathcal{P}_{Y,\hat Y, W|X} } \iiint c(y, \hat y) p(y,\hat y, w |x)\mathrm {d}y \mathrm {d} \hat y \mathrm {d}w \\
=& \inf_{\Gamma \in \mathcal{P}_{Y,\hat Y, W|X} } \iiint c(y, \hat y) p(y|x)p(w|y)p(\hat y |w,x)\mathrm {d}y \mathrm {d} \hat y \mathrm {d}w \\ 
\end{aligned} 
\end{equation*}
\begin{equation}
   \begin{aligned}
= & \inf_{\Gamma \in \mathcal{P}_{Y, W|X} } \iint c(y, g(w,x)) p(y|x)p(w|y)\mathrm {d}y \mathrm {d}w \\  
= &  \inf_{ Q } \ \mathbb{E}_{P_q} \mathbb{E}_{Q(W|Y)} \left [ c\left (  Y,g(W,X)\right )  \right ] \\
& \ \ \ \ \ \ \mathrm{s.t.}  \ \ Q_w = P_w,
\end{aligned} 
\end{equation}
where $ \mathcal{P}_{Y, W|X}$ denotes the set of measures, subject to the constraint that their marginals correspond to the distributions of $Y$, and $W$ conditioned on $X$, respectively. In the final step, we only need to find a distribution $Q$ such that $\mathbb{E}_{P_y}\left [ Q(W | Y) \right ] = P_w$. It is important to note that $P_y$ represents the marginal distribution, while in the final equality, $P_g$ denotes the conditional probability of $Y$ given $X$. 
\end{proof}
In the equivalent distance function in \eqref{w-loss}, there exists an additional constraint on $Q_z$. In the numerical optimization during the training, we favor relaxing this constraint to a penalty term in the distance function, which we can express as:
\begin{equation} \label{w-loss-complete}
\inf_{ Q  } \ \mathbb{E}_{P_q} \mathbb{E}_{Q(W|Y)} \left [ c\left (  Y,g(W,X)\right )  \right ] + \lambda \cdot  D(Q_w,P_w),
\end{equation}
where $\lambda$ is a hyperparameter, and function $D$ serves as a discrepancy function that quantifies the divergence between the two distributions, yielding a value of zero when the distributions are identical. One can choose the discrepancy function as Jensen-Shannon (JS) divergence, approximated by adversarial learning to capture the distribution divergence \cite{goodfellow2014generative}. In this work, we employ another approach based on MMD \cite{dziugaite2015training}. Specifically, we assess the difference between the two distributions by comparing their mean embeddings in a reproducing kernel Hilbert space. Since MMD does not rely on specific distributional forms, one can employ it for two-sample testing, which offers advantages when training on given data samples. Moreover, as noted in \cite{tolstikhin2017wasserstein}, MMD yields more stable training performance compared to the JS divergence. We will discuss further details regarding the MMD method in subsequent sections. \par
Ultimately, our objective is to minimize the Wasserstein distance between the generated distribution and the true residual distribution, which we can express as:
\begin{equation} \label{objective_wasser}
    \inf_{ g, \ Q  } \ \mathbb{E}_{P_x}\mathbb{E}_{P_q} \mathbb{E}_{Q(W|Y)} \left [ c\left (  Y,g(W,X)\right )  \right ] + \lambda \cdot  D(Q_w,P_w),
\end{equation}
where $P_x$ denotes the empirical distribution of the state. Consequently, the training process draws data from the joint sample probability distribution over $X$ and $Y$. The training set consists of realizations from this distribution $\left \{ (x_1,y_1), \dots , (x_N,y_N) \right \} $. Furthermore, as indicated in \eqref{objective_wasser}, we approximate the Wasserstein distance between the two distributions by parameterizing the encoder $Q$ and the decoder $g$ to minimize the objective and thereby obtain a generated distribution that closely matches the true distribution. \par
In this framework, in addition to the basic autoencoder process, where $Q$ represents the encoder and $g$ the decoder, we also require that the samples of $w$ generated by the decoder conform as closely as possible to the target distribution $P_w$, i.e., a standard normal distribution. Specifically, in each iteration, we draw $n$ samples from the training set, with $\left \{ y_1, \dots , y_n \right \} $ approximately distributed according to the marginal distribution of $Y$. Concurrently, by passing these samples through the encoder, we obtain $n$ generated samples of $W$. Meanwhile, we draw $n$ samples from the standard normal distribution as the ground truth. Applying the MMD to these samples then approximates the difference between the distributions.
\begin{remark}
In this work, we adopt a common assumption that the prior distribution $P_w$ is standard normal and independent of the state $X$. This assumption not only simplifies the implementation but also improves training stability. Additionally, our objective is to model the effect of the introduced external disturbances on the system state, as characterized by the residual in \eqref{residual}. However, employing a conditional prior $ Pw(W|X)$ generally offers greater flexibility, as it allows for capturing the latent structure under varying conditions.
\end{remark}
\subsection{Uncertainty Propagation in Nonlinear Dynamics }
In SMPC problems, accurately characterizing uncertainty propagation through nonlinear systems over the prediction horizon remains a central challenge, especially when individual disturbances affect the dynamics at each step. PCE offers a structured framework to represent the uncertainty propagation using orthogonal polynomial bases. This approach facilitates the efficient approximation of state distributions and statistical moments. \par
In PCE, we consider a random variable $W$ with probability density function $p(w)$. Our focus lies within the weighted function space: 
\begin{equation} \label{funciton_space}
    \mathcal{F}:= \left \{ f:\mathbb{R}^{n_w}\to \mathbb{R}:\mathbb{E}_P\left [ f^2(w) \right ]< \infty    \right \} ,
\end{equation}
which is a separable Hilbert space equipped with the inner product $\left \langle f,g \right \rangle :=\mathbb{E}_P\left [  f(w)g(w)\right ]   $. By results from Hilbert space theory, there exists a complete orthonormal basis $\left \{ \phi_i \right \} _{i=0}^\infty \subset \mathcal{F} $, such that $\left \langle \phi_i, \phi_j \right \rangle = \delta_{ij} $ with $\delta_{ij}$ be the Kronecker delta. Then we can expand any square-integrable function $f \in \mathcal{F} $ uniquely in terms of the basis of the form: 
\begin{equation} \label{fcn_expansion}
f(w)=  {\textstyle \sum_{i=0}^{\infty}} \ d_i \phi_i(w),
\end{equation}
where projecting $f$ onto $\phi_i$ gives the coefficients $d_i= \left \langle  f,\phi_i\right \rangle $. In practice, the distribution of $W$ determines the choice of basis functions. For instance, when $W \sim \mathcal{N}(0,1) $, Hermite polynomials form the orthogonal basis, known as Wiener chaos \cite{luo2006wiener}. For multivariate random variables $w \in \mathbb{R}^{n_w}$, the multivariate orthonormal basis arises from the tensor product  
$\Phi_\alpha(w):= {\textstyle \prod_{i = 1}^{n_w}} \phi_{\alpha_i}(w_i), \quad \alpha \in \mathbb{Z}_{\ge 0}^{n_w}$. Furthermore, we generally need to truncate the expansion to a finite order $p$, and then approximate the function $f$ as $f(w)\approx  {\textstyle \sum_{\alpha\in\mathcal{A}_p }}  \ d_{\alpha}\Phi_\alpha(w) $, where $d_{\alpha}$ is projection coefficient on basis $\Phi_\alpha(w)$, and the set $\mathcal{A}_p :=\left \{ \alpha \in \mathbb{Z}_{\geq0}^{n_w}:  {\textstyle \sum_{i=1}^{n_w}}\alpha_i\leq p   \right \}$. The number of multivariate polynomial basis functions up to total degree $p$ in $n_w$ dimensions follows $\left | \mathcal{A}_p  \right | = \frac{(n_w+p)!}{n_w!p!} $. In the following, for the sake of formulation simplicity, we adopt the form $f(w)\approx  {\textstyle \sum_{i = 0 }^{P-1}}  \ \hat{w}_i\Phi_i(w)$ to represent the multivariate PCE with $P = \left | \mathcal{A}_p  \right |$, and $\hat{w}_i$ denotes the corresponding coefficient. \par
In addition, based on the orthogonality property of the PCE basis, we compute the mean and variance of the truncated approximation of the function $f$ as shown below:
\begin{equation} \label{PCE_mean}
    \mathbb{E}\left [ f(w) \right ]  = \hat{w}_0,
\end{equation}
\begin{equation} \label{PCE_variance}
    \mathrm{Var } \left [ f(w) \right ]  =  {\textstyle \sum_{i=1}^{P-1}} \hat{w}_i^2\left \langle \Phi_i(w), \Phi_i(w) \right \rangle.
\end{equation}
With normalization, the basis functions become orthonormal, having inner products equal to one.\par

Next, we discuss how to determine the PCE coefficients. Generally, we classify the computation of the PCE coefficients into two approaches, namely non-intrusive and intrusive methods \cite{eldred2009comparison}. The non-intrusive approach treats the system as a black-box and estimates the PCE coefficients based on realizations of the uncertain sample input. In contrast, the intrusive approach rewrites the transition model in terms of the PCE basis, yielding a coupled deterministic system that governs the evolution of the PCE coefficients. Although the intrusive method may offer better accuracy and preserve structure, it requires an analytical formulation of the system equations, which can be challenging for complex models. \par
In this work, we adopt the non-intrusive PCE approach, which offers greater flexibility and ease of integration in the SMPC framework. More precisely, we follow the approach outlined in \cite{fagiano2012nonlinear}, performing a weighted least squares regression on a finite set of collocation points, similar to the Probabilistic Collocation Method \cite{hockenberry2004evaluation}. First, $\bar{w} ^{(i)} \in \mathbb{R}^{n_w} $ with $i \in \left \{ 1:N_w \right \} $ represent $N_w$ independent samples drawn from its probability distribution. The total number of polynomial terms $P$ depends on the dimensionality $n_w$ of $w$ and the chosen maximal truncation degree $p$. Given the function $f: \mathbb{R}^{n_w}\rightarrow   \mathbb{R}$, we can evaluate the function $f$ at each sample $\bar{w} ^{(i)} $ and denote the corresponding outputs by $\bar{\nu}^{(i)}$.\par
In the following, we consider a least-squares regression problem with $l_2$-norm regularization of the form:
\begin{equation} \label{ls_problem}
    \min_{\hat{w}} \ \lambda\left \| \Gamma \left ( \bar{\nu}- \bar{\Phi}\hat{w}\right )  \right \| _2^2 + \left \| W\hat{w} \right \|_2^2
\end{equation}
where $\hat{w} \in \mathbb{R}^P $ denotes the vector of basis coefficients, $\bar{\nu} \in \mathbb{R}^{N_w} $ represents the simulated output sample vector, and $\bar{\Phi} \in \mathbb{R}^{N_w \times P}$ is the matrix containing the evaluations of the basis functions at different input samples. Moreover, $\Gamma \in \mathbb{R}^{N_w \times N_w } $ denotes a diagonal matrix whose diagonal entries correspond to the probability density values associated with each sample point, thereby assigning higher importance to samples with higher probability. To address the typically small magnitude of probability density function values, particularly in high-dimensional spaces, we introduce a scaling factor $\lambda$ to rescale the weights. Furthermore, $W \in \mathbb{R}^{P\times P} $ is a diagonal matrix, representing the regularization penalty to the basis coefficient. In this case, assigning larger weights to higher-order polynomials promotes smoothness and reduces overfitting. \par
Ultimately, the analytical solution to \eqref{ls_problem} is as follows:
\begin{equation} \label{ls_solution}
  \hat{w}^\ast = \underset{:= A}{\underbrace{\left ( \lambda \bar{\Phi}^\top \Gamma^\top\Gamma \bar{\Phi}  + W^\top W\right ) ^{-1} \lambda\bar{\Phi}^\top \Gamma^\top\Gamma } }  \ \bar{\nu}, 
\end{equation}
with $A \in \mathbb{R}^{P\times N_w}$. In cases where the number of basis functions exceeds the number of available samples, i.e., $P>N_w$, we can apply the Woodbury matrix identity to reformulate the inverse as $\left ( \lambda \bar{\Phi}^\top \Gamma^\top\Gamma \bar{\Phi}  + W^\top W\right ) ^{-1} = W^{-2}-\lambda W^{-2} \bar{\Phi}^\top \Gamma\left ( I+\lambda  \Gamma \bar{\Phi} W^{-2}\bar{\Phi}^\top\Gamma\right )^{-1} \Gamma \bar{\Phi} W^{-2}$, which reduces the computational burden from the direct inversion of a $P \times P$ matrix. \par
In this recursive PCE approach, although the state $x_{i+1|k}$ depends on the entire history of past disturbances, we approximate it as a function of only the current random input $w_{i|k}$ and the current state $x_{i|k}$. For control purposes, it is desirable to design a control strategy that progressively contracts the state distribution at each step. Therefore, in this work, we decouple the dependence on the full disturbance history and focus only on the overall dispersion and shape characteristics of the marginal state distribution of $x_{i|k}$.  \par
In detail, we consider the propagation of uncertainty over the prediction horizon at time instant $k$. At time $0|k$, the state $x_{0|k}$ is then equal to the current state feedback $x_k$, and the uncertainty arises from $w_{0|k}$, which is a latent variable assumed to follow a Gaussian distribution. Sampling a total of $N_{w_0}$ samples $w_{0|k}^{(j)}\in \mathbb{R}^{n_w} $ from this distribution, we simulate the model to obtain the corresponding next-step states $x_{1|k}^{(j)} \in \mathbb{R}^{n_x} $. Additionally, we denote $\bar{x}_{1|k} \in \mathbb{R}^{N_{w_0} \times n_{x}}  $ as the simulated output sample matrix. The fitted coefficients $\hat{x}_{1|k}\in \mathbb{R}^{P\times n_x} $ based on \eqref{ls_solution} are:
\begin{equation} \label{ls_solution_w0}
\hat{x}_{1|k} = A \bar{x}_{1|k} .
\end{equation} \par
At time step $i|k$ with $i>0$, the uncertainty arises not only from $w_{i|k}$, but also from all previous disturbances $w_{i-1|k}, \dots , w_{0|k}$. Accordingly, when constructing the PCE for $x_{i+1|k}$, the polynomial basis must depend on all $w_{\kappa | k}$ for $\kappa = 0, \dots, i$. As a result, if the polynomial order $p $ remains fixed, the total number of polynomial terms increases rapidly with the number of time steps. Specifically, for small $p$, the total number of basis terms increases polynomially as $P(i) = \binom{p+(i+1)n_w }{(i+1)n_w }  \sim \left (  (i+1)n_w\right ) ^ p$. Moreover, the number of required samples increases exponentially with the prediction horizon, rendering the PCE construction for $x_{i+1|k}$ computationally intractable. \par
In the proposed PCE-based method, we decouple the influence of historical  disturbances by incorporating it into the marginal distribution of the state $x_{i|k}$. As a result, $x_{i+1|k}$  depends only on $x_{i|k}$ and the current disturbance $w_{i|k}$. Specifically, in the optimization problem defined in \eqref{smpc}, we consider only the first and second moments of the marginal state distribution $x_{i|k}$ and treat them as implicit decision variables. Here, $\underline{x} _{i|k} \in \mathbb{R}^{n_x} $ denotes the first moment, i.e., the mean vector, and $\Sigma _{i|k} \in \mathbb{R}^{n_x \times n_x} $ denotes the second moment, i.e., the covariance matrix, at the step $i$. \par
In practice, considering the symmetry of the covariance matrix $\Sigma _{i|k}$ and the reparameterization trick used later, we adopt the lower-triangular matrix $L _{i|k} \in \mathbb{R}^{n_x \times n_x}$ from the Cholesky decomposition of $\Sigma _{i|k}$  as the decision variable, where $\Sigma _{i|k} = L _{i|k} L  _{i|k} ^ \top$. This approach reduces the number of variables to $n_x(n_x+1)/2$, guarantees symmetry and positive definiteness of the covariance, and improves numerical stability. To further ensure the uniqueness of the decomposition, we impose non-negativity constraints on the diagonal elements of $L _{i|k} $. Under this assumption, the maximum entropy distribution estimate given the specified first and second moments is a normal distribution. Thus, we introduce a new latent random variable $\xi$ following a standard normal distribution. To enable differentiable sampling, we use the reparameterization and represent state samples as: 
\begin{equation} \label{state_samples}
    x _{i|k}^{(j)}  = L_{i|k} \ \xi_{i|k}^{(j)} + \underline{x} _{i|k}.
\end{equation}
According to \eqref{ls_solution_w0}, we can similarly derive the PCE coefficients $\hat{x}_{i|k}$. Furthermore, we can partition these state coefficients into two parts. The first part, $\tilde {x}_{i|k} := \hat{x}_{i|k}[0,:]^{\top } \in \mathbb{R}^{n_x} $, represents the mean value of $x_{i|k}$ based on \eqref{PCE_mean}. The corresponding nonlinear equality constraint takes the form:
\begin{equation} \label{mean_constraint}
  \tilde {x}_{i|k} - \underline{x} _{i|k} = 0.
\end{equation}
The second part is $\tilde {X}_{i|k} :=  \hat{x}_{i|k}[1:P,:]\in \mathbb{R}^{P-1 \times n_x}$, and $\tilde {X}_{i|k} ^\top  \tilde {X}_{i|k}$ captures the covariance of $x_{i|k}$ based on \eqref{PCE_variance}. The corresponding nonlinear equality constraint has the form:
\begin{equation} \label{covariance_constraint}
 \tilde {X}_{i|k} ^\top  \tilde {X}_{i|k} - L_{i|k}L_{i|k}^{\top } = 0.
\end{equation}
This defines a symmetric matrix equality constraint, which we equivalently decompose into $n_x(n_x + 1)/2$ scalar equality constraints. In practical implementations, we can also relax this constraint and incorporate it into the objective function as an augmented term. Specifically, the relaxation introduces a Frobenius norm $\left \| \tilde {X}_{i|k} ^\top  \tilde {X}_{i|k} - L_{i|k}L_{i|k}^{\top }  \right \| _F^2$. \par
Subsequently, we consider the following time-varying control policy in this work:
\begin{equation} \label{control_policy}
u_{i|k} = \underline{u}_{i|k}+K_{i|k}(x_{i|k}-\underline{x}_{i|k}) 
\end{equation}
where both $\underline{u}_{i|k} \in \mathbb{R}^{n_u} $ and $K_{i|k} \in \mathbb{R } ^{n_u \times n_x }$ are decision variables. The matrix $K_{i|k}$ as feedback gain regulates the contraction of uncertainty propagation, while $\underline{u}_{i|k}$ serves as a feedforward term. For simplicity, we redefine the decision variables as $\tilde{K}_{i|k}$ and $\tilde{u}_{i|k}$, as follows. When combining \eqref{state_samples} and \eqref{control_policy}, we observe the emergence of a bilinear term $K_{i|k}L_{i|k}$. To address this, we redefine a feedback gain $\tilde{K}_{i|k}$ as: 
\begin{equation} \label{control_policy_redef}
\tilde K _{i|k} := K_{i|k}L_{i|k}.
\end{equation}
Meanwhile, we introduce a new variable $\tilde{u}_{i|k} :=  \underline{u}_{i|k} - K_{i|k}\underline{x}_{i|k}$ as the feedforward term owing to the arbitrariness of $ \underline{u}_{i|k}$. At this stage, we first sample the disturbance at each prediction step, generating $N_w$ realizations of the lifted uncertainty $\{  (\xi_{i|k}^{(j)},w_{i|k}^{(j)}) \}$. Accordingly, the propagation results of $x_{i+1|k}^{(j)}$ based on \eqref{nonlinear_ss} yield:
\begin{equation} \label{propapgation_state}
\begin{aligned}
    x_{i+1|k}^{(j)} = & f(L_{i|k}  \xi_{i|k}^{(j)} + \underline{x}_{i|k}, \tilde {K}_{i|k} \xi_{i|k}^{(j)} + \tilde {u}_{i|k} ) \\
    &+ g(L_{i|k}  \xi_{i|k}^{(j)} + \underline{x}_{i|k}, w_{i|k}^{(j)}),
\end{aligned}
\end{equation}
where the state and control input samples are affine functions of the decision variables, and the simulated output sample matrix as in \eqref{ls_solution_w0} is $\bar{x}_{i+1|k}=\left [ x_{i+1|k}^{(1) },  \dots  ,  x_{i+1|k}^{(N_w) } \right ] ^\top $. \par
Subsequently, we address the soft constraints to reformulate them into a tractable form.  As discussed above, we constrain only the first and second moments of the state distributions, as shown in \eqref{mean_constraint} and \eqref{covariance_constraint}, while potential deviations in higher-order moments are not explicitly considered. To account for this, we consider the family of distributions that share the same mean and covariance and assess the impact of the worst-case distribution within this family on the soft constraints, corresponding to a worst-case distributionally robust chance constraint under known mean and covariance. In this context, applying Cantelli’s inequality yields tractable conditions \cite{calafiore2006distributionally}, under which we reformulate the state and input constraints in \eqref{smpc} as:
\begin{equation} \label{state_constraint}
G_{x,i}\underline{x}_{i|k} + \kappa_{\epsilon_1}  \left \| L_{i|k} G_{x,i}^\top  \right \| _2 \leq g_{x,i}, \ \text{with} \ \kappa_{\epsilon_1 } = \sqrt{\frac{1-\epsilon_1}{\epsilon_1} }  ,
\end{equation}
and
\begin{equation} \label{input_constraint}
G_{u,i}\tilde{u}_{i|k} + \kappa_{\epsilon_2}  \left \| \tilde {K}_{i|k} G_{u,i}^\top  \right \| _2 \leq g_{u,i}, \ \text{with} \ \kappa_{\epsilon_2 } = \sqrt{\frac{1-\epsilon_2}{\epsilon_2} }  ,
\end{equation}
where the reformulated constraints are all second-order cone constraints. Here, $G_{x,i}$ denotes the $i$-th row of the matrix $G_x$, indicating that each constraint is enforced individually for each component rather than jointly enforcing a probabilistic guarantee over the whole set $\mathcal{X}  $. Moreover, we observe from the formulation in \eqref{input_constraint} that the redefinition of the decision variables in \eqref{control_policy_redef} simplifies the overall expression.

\subsection{Objective Function}
In this section, we introduce a new variant of the objective function to incorporate the higher-order moments of the variables. In standard SMPC frameworks, the objective aims to optimize the performance in the expected sense. When adopting a quadratic form for the stage cost, the expression for the performance index in \eqref{smpc} follows as:
\begin{equation} \label{objective_standard}
J_k := \mathbb{E}\left [ \sum_{i=0}^{N-1} \left ( x_{i|k}^\top Q x_{i|k} + u_{i|k}^\top R u_{i|k}\right )+x_{N|k}^\top Q_T x_{N|k}  \right ]  ,
\end{equation}
where $Q\succ 0$, $R\succ 0$, and $Q_T\succ 0$ denote the weighting matrices for the state, input, and terminal state, respectively. The terminal cost matrix $Q_T$ arises from the solution to the discrete-time Lyapunov equation $Q_T =Q + K^\top R K +(A+BK)^\top Q_T (A+BK)$, where we obtain the feedback gain matrix $K$ by solving the discrete-time algebraic Riccati equation, given the weighting matrices $Q$, $R$, and the linearized system around the mean operating point. Moreover, given the mean and covariance of the state and input, the expected value $\mathbb{E}\left [ x_{i|k}^\top Q x_{i|k} \right ]  $ takes the form $\mathbb{E}\left [ x_{i|k}^\top Q x_{i|k} \right ]  = \mathbb{E}\left [  x_{i|k}\right ]  ^\top Q \mathbb{E}\left [  x_{i|k}\right ]+\mathrm {Tr}\left ( Q \mathrm {Cov}(x_{i|k}) \right ) $. Then, the alternative formulation of the objective function in \eqref{objective_standard} takes the form: 
\begin{equation} \label{objective_ref}
\begin{aligned}
     J_k & =\sum_{i=0}^{N-1}\Big(  \mathbb{E}\left [  x_{i|k}\right ]  ^\top Q \ \mathbb{E}\left [  x_{i|k}\right ]+\mathrm {Tr} ( Q  \ \mathrm {Cov}(x_{i|k})    )\\
     & +  \mathbb{E}\left [  u_{i|k}\right ]  ^\top R \ \mathbb{E}\left [  u_{i|k}\right ] +\mathrm {Tr} ( R\ \mathrm {Cov}(u_{i|k})    )  \Big)\\
     & + \mathbb{E}\left [  x_{N|k}\right ]  ^\top Q_T \ \mathbb{E}\left [  x_{N|k}\right ]+\mathrm {Tr} ( Q_T \ \mathrm {Cov}(x_{N|k})    )  .
\end{aligned}
\end{equation} 
By examining the formulation in \eqref{objective_ref}, we observe that only the first and second moments of the state and input distributions are penalized at each prediction step, without explicitly considering higher-order moments. \par
In the following, we introduce a novel objective function that penalizes the distance between the actual distribution and a desired target distribution. To this end, we first present the definition of MMD, given as follows:
\begin{equation} \label{MMD}
\mathrm {MMD}_{\mathcal{H} }(P,Q)=\sup_{f\in\mathcal{H}, \left \| f \right \|_{\mathcal{H}} \le 1 } \left ( \mathbb{E}_{x\sim P}\left [ f(x) \right ]-  \mathbb{E}_{y\sim Q}\left [ f(y) \right ] \right )   ,
\end{equation}
where $ \mathcal{H}$ denotes a reproducing kernel Hilbert space (RKHS), and $f$ belongs to this space. Intuitively, we can interpret MMD as finding the function $f$ that best distinguishes between the distributions $P$ and $Q$, thereby quantifying their difference. We restrict the search to the family of functions within the RKHS whose norms do not exceed one. Furthermore, When $f$ is linear, it measures discrepancies in the first-order moments, and when $f$ is quadratic, it captures second-order moment differences. By considering a sufficiently rich function family, MMD provides a measure of higher-order discrepancies between the distributions. \par
Subsequently, by leveraging the reproducing property of the RKHS and the Cauchy–Schwarz inequality, we can derive the following expression for the MMD:
\begin{equation} \label{MMD_ref}
\mathrm {MMD}_{\mathcal{H} }(P,Q)=\left \| \mu _P-\mu_Q \right \|_{\mathcal{H} } ,
\end{equation}
where $\mu_P:=\mathbb{E}_P \left [ k(x,\cdot ) \right ] $ defines the mean embedding of the distribution $P$, and similarly for $Q$. The bivariate function $k(\cdot, \cdot)$ serves as the reproducing kernel that determines the geometry of the RKHS. The mean embedding captures the mapping of the probability density function of $P$ and $Q$ from the $\mathcal{L}^2 $ space into the RKHS. Under this framework, the MMD provides an effective measure of the distance between the distributions $P$ and $Q$ in the RKHS. In practice, we often use the squared MMD expanded as:
\begin{equation} \label{MMD_2}
\begin{aligned}
     \mathrm {MMD}_{\mathcal{H} }^2 &= \mathbb{E}_{x,x' \sim P}\left [ k(x,x') \right ]  + \mathbb{E}_{y,y' \sim Q}\left [ k(y,y') \right ]\\
     &-2\mathbb{E}_{x \sim P,y \sim Q}\left [ k(x,y) \right ].
\end{aligned}
\end{equation} \par
Furthermore, we can obtain a finite sample estimate of \eqref{MMD_2} using empirical samples. Given samples $\{x^{(i)}\} _{i=1} ^{N_x}$ from $P$ and $\{y^{(j)}\} _{j=1} ^{N_y}$ from distribution $Q$, the empirical estimate of the MMD takes the form:
\begin{equation} \label{MMD_final}
\begin{aligned}
     \mathrm {MMD}_{\mathcal{H} }^2 &= \frac{1}{N_x(N_x-1)}\sum _{i\ne j} k(x^{(i)},x^{(j)})  + \frac{1}{N_y(N_y-1)}\\&\sum _{i\ne j} k(y^{(i)},y^{(j)})
     -\frac{1}{N_xN_y}\sum _{i, j} k(x^{(i)},y^{(j)}).
\end{aligned}
\end{equation} \par
Next, we utilize the MMD to formulate our new objective function. Our goal is to drive the distribution of the state $x_{i|k}$ at each prediction step to align closely with a prescribed target distribution, represented by a Dirac delta distribution $\delta(x-x_{\mathrm{ref}})$ with the reference value $x_{\mathrm{ref}}$. Accordingly, the deviation between the state distribution and the target based on \eqref{MMD_2} takes the form: 
\begin{equation} \label{MMD_target}
\begin{aligned}
     \mathrm {MMD}^2(\delta_{x_{\mathrm{ref}}},P(x_{i|k}))&=k(x_{\mathrm{ref}},x_{\mathrm{ref}}) + \mathbb{E}_{x,x' \sim P}\left [ k(x,x') \right ] \\& -2 \mathbb{E}_{x\sim P }\left [ k(x_{\mathrm{ref}},x) \right ] 
\end{aligned}
\end{equation}
where the first term constitutes a constant, which we omit in the optimization. Then, we numerically estimate the squared MMD in \eqref{MMD_target} using the samples $\bar{x}_{i|k}$ generated from the simulations as in \eqref{ls_solution_w0}. \par
Since the estimation method based on \eqref{MMD_final} is not data-efficient, we further employ a PCE-based expansion of the MMD to enable fast computation while avoiding extensive sampling. Specifically, we precompute the least squares projection matrix $A$ as in \eqref{ls_solution_w0} offline, following the above-mentioned procedure. Then, by evaluating the MMD for each sample in $\bar{x}_{i|k}$, we obtain the simulated output, denoted as $\bar{\chi}_{i|k} \in \mathbb{R}^{N_w} $. The PCE coefficients are $\hat{\chi}_{i|k} \in \mathbb{R}^{P}  =A\bar{\chi}_{i|k}$ through \eqref{ls_solution_w0}. In practice, our interest lies in the expected value of the kernel function in \eqref{MMD_target}, which corresponds to the first PCE coefficient. Therefore, it is sufficient to retain only the truncated matrix $A_{1,:}$ for computation. \par
The subsequent question is how to evaluate the MMD contribution for each sample efficiently. Here, we treat each sample $x_{i|k}^{(j)}$ in $\bar{x}_{i|k}$ as a singleton distribution and approximate its contribution to the local MMD. Then, we can express the second term as $\frac{1}{N_w-1} \sum _{j \ne z} k(x_{i|k}^{(j)}, x_{i|k}^{(z)})$ and the third term as $k(x_{\mathrm{ref} },x_{i|k}^{(j)})$. This yields the MMD evaluation for individual samples, which we then aggregate to form $\bar{\chi}_{i|k}$. In addition, to avoid redundant evaluations of the kernel function $k$, we store the computed values in a kernel matrix defined as $K_{\mathrm{MMD}} = \left \{ k(x_i,x_j) \right \}_{i,j=1} ^{N_w}$. \par
As a result, the objective function takes the form:
\begin{equation} \label{MMD_obj}
\begin{aligned}
J_{ \mathrm{MMD}  } = \sum_{i=1}^{N} \mathrm {MMD}^2(\delta_{x_{\mathrm{ref} }},P(x_{i|k})) + \left \|  \mathbb{E}\left [ u(i|k) \right ]  \right \| _{R} ^2.
\end{aligned}
\end{equation}
By optimizing this objective function, we can control the distribution of the state at each prediction step to contract towards the reference $x_{\mathrm{ref} }$.

\section{Simulation Results and Discussion}
In this section, we begin by introducing the system states, control inputs, boundary conditions, and the control objectives. We then detail the setup, including parameter choices and implementation settings. Next, we validate the learned uncertainty model in terms of distributional alignment and assess the closed-loop control performance of various MPC strategies under consistent test conditions. Finally, we provide a discussion of the results, highlighting key insights and limitations.
\subsection{System Description}
In this work, we employ a data-driven model from \cite{chen2023gasoline} as both the prediction model and the plant model for control, primarily to evaluate the performance of the proposed control strategy. The model describes the state evolution between engine cycles. The study collects all experimental data from a single-cylinder research engine designed for HCCI combustion research \cite{bedei2023dynamic}. \par
The system states include three key performance indicators of HCCI combustion: $x_{\mathrm{CA50}}$, $x_{\mathrm{IMEP}}$, and $x_{\mathrm{DPmax}}$. Specifically, CA50 represents the crank angle at which \SI{50}{\percent} of the total heat release has occurred. It serves as an indicator of combustion efficiency, where a large CA50 value suggests inefficient combustion, and a small or negative value indicates early combustion. Both excessively delayed and early combustion are undesirable. Moreover, the variation in CA50 reflects combustion stability, with smaller variations indicating more consistent combustion at the operating point.  Indicated Mean Effective Pressure (IMEP), represents the average pressure in the cylinder and serves as a direct measure of engine output torque. DPmax refers to the maximum pressure rise rate during combustion. High DPmax values, often associated with phenomena such as knocking, can result in excessive mechanical stress and noise. The in-cylinder pressure trace provides all the information needed to calculate these state variables.\par
The corresponding control objectives and constraints are defined as follows: 
For CA50, we aim to constrain the state within the safe interval 
$x_{\mathrm{CA50}} \in \left [ \SI{2}{\degreeCA}    , \SI{13}{\degreeCA} \right ] $, while maintaining it near a setpoint of \SI{7}{\degreeCA}. For IMEP, the control objective is accurate reference tracking, aiming to closely follow a specified torque demand. We constrain the IMEP state within the operating range $x_{\mathrm{IMEP}} \in \left [ \SI{2}{\bar}   , \SI{4.5}{\bar} \right ] $. For DPmax, we enforce a safety constraint to avoid excessive pressure rise, with a range $x_{\mathrm{DPmax}} \in \left [ \SI{0}{\bar\per\degreeCA}    , \SI{5}{\bar\per\degreeCA} \right ] $. \par

Regarding the control inputs, the system has three actuators, namely $u_{\mathrm{nvo}}$, $u_{\mathrm{fuel}}$, and $u_{\mathrm{eth}}$. $u_{\mathrm{nvo}}$ denotes the Negative Valve Overlap (NVO), which represents the timing advance of the exhaust valve closing relative to the intake valve opening. This creates a period where residual gases from the previous cycle remain trapped inside the cylinder, which is essential for enabling stable HCCI combustion. By adjusting the NVO, we can modulate the amount of trapped residual gases, which in turn influences the combustion characteristics. We constrain the NVO input within $u_{\mathrm{nvo}} \in \left [ \SI{172}{\degreeCA}    , \SI{232}{\degreeCA} \right ] $. $u_{\mathrm{fuel}}$ represents the fuel injection duration, measured in milliseconds, which is approximately proportional to the injected fuel mass. In this experiment, the start of injection is fixed at \SI{456}{\degreeCA}, and the fuel injection input is constrained within $u_{\mathrm{fuel}} \in \left [ \SI{0.5}{\milli\second}  , \SI{0.97}{\milli\second} \right ] $. $u_{\mathrm{eth}}$ represents the amount of ethanol injection, also measured in milliseconds, with the fixed injection timing at \SI{580}{\degreeCA}. Ethanol injection serves not only as a supplementary fuel but also as a means to provide cooling effects. The ethanol injection input lies within $u_{\mathrm{eth}} \in \left [ \SI{0}{\milli\second}  , \SI{0.4}{\milli\second} \right ] $.\par 
The system dynamics result from the interactions between these state variables and control inputs. Moreover, the modeling approach treats unmodeled dynamics, external disturbances, and measurement noise collectively as random variables. The characterization of these uncertainties relies on residual data, computed as the discrepancy between the model predictions and experimental measurements. \par

\subsection{Setup}
This section outlines the setup, including the training procedure of the uncertainty model, the configuration of the controller, and the specification of the optimization parameters. \par
In this work, we primarily focus on modeling the influence of disturbances on the two key variables, CA50 and IMEP. We further assume that the distribution of the residuals depends on the current states of CA50 and IMEP. To model this relationship, we collect $50000$ pairs of state and residual data 
$(x^{(i)},y^{(i)})$, and define the residual as $y := g(x, w)$ as in \eqref{nonlinear_ss}. The objective is to explicitly learn an mapping of $g$ through training. \par 
To facilitate this, we introduce a latent variable $w \in \mathbb{R}^2 $ that follows a standard normal distribution. Furthermore, to train the model, we adopt a neural network architecture comprising an encoder and a decoder. The encoder is a fully connected neural network, with ReLU activation functions following each layer. It consists of three hidden layers, each with 30 neurons. A sampling layer is appended at the end, which outputs samples from a learned probability distribution based on the predicted mean and log-variance. This design enables stochastic sampling while maintaining differentiability for backpropagation, following the approach in \cite{kusner2017grammar}. The decoder is a simple parameterized deterministic function approximator, also implemented as a fully connected neural network with three hidden layers of 30 neurons each. For the training process, we use a mini-batch size of 320 and train for 15 epochs. The loss function consists of the reconstruction and the regularization term in \eqref{objective_wasser}. The reconstruction loss takes the form of the mean squared error between the predicted and true residuals. For the regularization, we employ an MMD term using a Gaussian kernel with a standard deviation of $0.5$ and a weighting coefficient $\lambda$ of $2.5$. Compared to other generative models such as GAN, the Wasserstein Autoencoder framework generally provides more stable training and yields samples that more closely match the true distribution \cite{tolstikhin2017wasserstein}. \par
In the SMPC optimization problem, we normalize all state variables and control inputs. The prediction horizon is set to $N = 4$, considering the fast cycle-to-cycle dynamics of the HCCI combustion process. The allowable violation margins $\epsilon_1$ and $\epsilon_2$ for both the state and input constraints are set to $0.95$. Since the optimal control input at the initial prediction step is directly applied to the system without a feedback term, we impose a hard input constraint at the initial step to account for the practical input saturation limits. To guarantee the feasibility of the optimization problem, we add global slack variables to softly relax the constraints. \par
In the objective function defined in \eqref{objective_ref}, the state weighting matrix is $Q = \mathrm {diag}(10,10,0.1)  $, the input weighting matrix is $R = \mathrm {diag}(0.2,1,0.5)  $, and we compute the terminal cost matrix $Q_T$ offline by solving a discrete-time Lyapunov equation based on a linearized model around the mean state and input. The resulting $Q_T$ matrix has diagonal elements $[10.32$, $11.4$, $0.18]$ and off-diagonal elements: $Q_T(1,2)=0.65$, $Q_T(1,3)=-0.04$, and $Q_T(2,3)=-0.15$. In the MMD-based objective function, the input weight $R$ remains $\mathrm {diag}(0.2,1,0.5) $. The MMD computation uses a Gaussian kernel with a bandwidth of $2.5$ to represent a family of smooth functions. \par
Next, we specify the PCE settings. At the initial prediction step $0|k$, we compute the least squares projection matrix $A_0$ in \eqref{ls_solution_w0} based on the samples. Specifically, we draw $20$ samples from the standard normal distribution of $w$, and a maximal polynomial order is set to $3$, resulting in $P_0=10$ polynomial terms. The scaling weight $\lambda_0$ in \eqref{ls_problem} is set to $200$. In the matrix $W$ in \eqref{ls_problem}, the weights assigned to the $0$th to $3$rd order polynomials are $1$, $0.3$, $0.1$, and $0.03$, respectively. For subsequent prediction steps, i.e., steps $i > 0$, we draw $45$ samples from the joint standard normal distribution of $(\xi,w)$ and employ a maximal polynomial order of $2$, resulting in $P=21$ polynomial terms. The scaling weight $\lambda$ is set to $2000$. In this case, the weights for the $0$th to $2$nd-order polynomials are set as $1$, $0.3$, and $0.1$, respectively. We generate all sampling points for PCE construction offline and precompute the least-squares projection matrices $A$ at each prediction step accordingly. During online execution, only function evaluations over the sampled inputs are necessary. \par
\subsection{Evaluation Protocol}
In the subsequent evaluation, we first assess the performance of the uncertainty model trained using the autoencoder framework, specifically examining its discrepancy from the target distribution. Next, we evaluate the impact of incorporating the uncertainty model and the PCE method into the controller design. Finally, we examine the overall performance improvement achieved by integrating the MMD-based objective into the MPC framework. \par
For the uncertainty model evaluation, we build a dataset consisting of $50000$ data pairs $(x^{(i)},r^{(i)})$ collected from our test bench, with $10000$ pairs used for testing the performance of the trained model and the remaining $40000$ for training. These $10000$ test pairs reflect the marginal distribution of the uncertainty, as the states are sampled under various diverse operating conditions, consistent with the overall data distribution. Since the uncertainty model characterizes the conditional probability distribution of the uncertainty given different state conditions, we further evaluate its performance by testing the model outputs under both typical single-cylinder operating conditions and more extreme scenarios and examine whether the predicted distributions of the model accurately reflect the behavior of the true underlying uncertainty. \par
To evaluate the effectiveness of the PCE-based method and the MMD-based objective function in the controller design, we conduct a series of ablation studies. Specifically, we compare the following control strategies. Nonlinear MPC: A baseline nonlinear MPC that does not explicitly account for uncertainty in the prediction; Gaussian-based SMPC: A stochastic MPC that models uncertainty using a Gaussian approximation, where the uncertainty model is linear and equipped with a quadratic objective; PC-based SMPC: A stochastic MPC that utilizes a nonlinear uncertainty model via PCE with a quadratic objective; GEM-SMPC: A stochastic MPC that incorporates both a nonlinear uncertainty model via PCE and a non-quadratic MMD-based objective. In this ablation study, we exclude the combination of SMPC with Gaussian approximation and MMD objectives, as this SMPC approach is not sample-based and the Gaussian distribution provides limited modeling flexibility. \par
In the evaluation of the controllers, we primarily focus on two key performance indicators: CA50 and IMEP. For CA50, we aim to ensure stable combustion by minimizing its variance, while also avoiding frequent violations of state constraints. For IMEP, our goal is accurate reference tracking. Because IMEP is less sensitive to uncertainty compared to CA50, we consider the mean IMEP trajectories in this work. In the simulation study, we simulate $120$ consecutive engine cycles with a time-varying IMEP reference. For each controller, we conduct $50$ Monte Carlo simulations, each with different disturbance realizations, to capture the variability of the control performance. In the setup, we fix the setpoint for CA50 at \SI{7}{\degreeCA} and design the IMEP reference as a step signal, switching between the values $2.8$, $2.2$, $3.2$, and \SI{3.9}{\bar}.

\subsection{Results}
This subsection presents the results of the evaluation, covering the sampling performance of the learning-based uncertainty model and the comparative performance of various controllers with respect to CA50 stability and IMEP tracking accuracy.  \par
First, we present the simulation results of the learned uncertainty model. Figure~\ref{exp:uncertainty} and Figure~\ref{exp:uncertainty2} illustrate the consistency between the ground truth and the distribution generated by the proposed uncertainty model in terms of CA50 and IMEP. The ground truth distribution arises from uncertainty samples collected under varying state operating conditions, while the generated samples are also produced under the identical set of state conditions. We use a total of $10000$ samples in each case, providing a reasonable approximation of the marginal distribution of uncertainty. \par
As shown in the scatter plot of Figure~\ref{exp:uncertainty}, the uncertainty model captures the general trend and spread of the ground truth distribution. 

\begin{figure}[htb]
\centering
\includegraphics[width=0.48\textwidth]{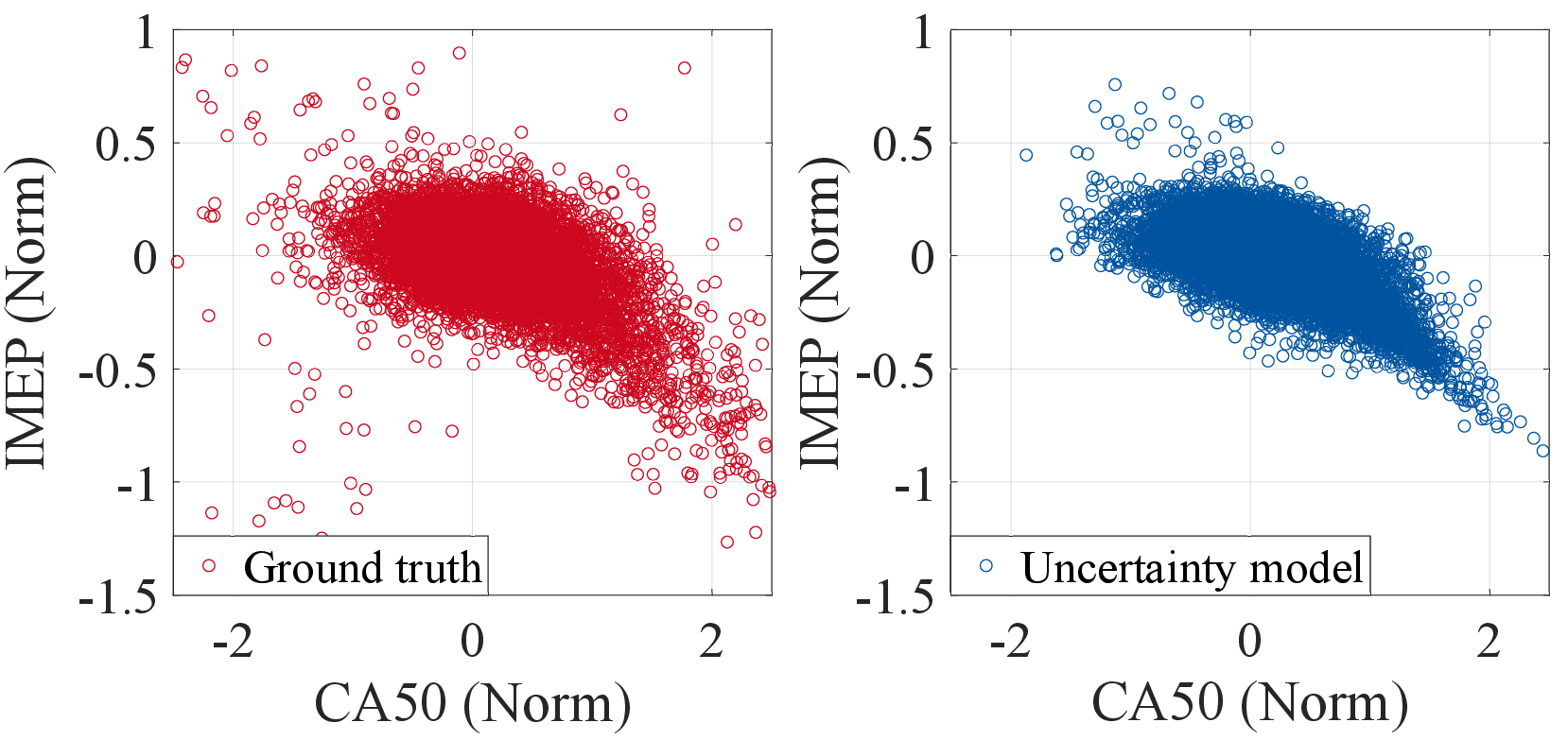}   
\caption{Comparison of the marginal uncertainty distributions of normalized IMEP and CA50 between the ground truth (left) and the uncertainty model (right), based on $10,000$ samples.  }  
\label{exp:uncertainty}  
\end{figure}      
In Figure~\ref{exp:uncertainty2}, the 3D histogram visualization further confirms the alignment of the peak and concentration areas between the modeled and true distributions. The generated distribution also reflects certain higher-order moment characteristics such as skewness and kurtosis. \par

\begin{figure}[htb]
\centering
\includegraphics[width=0.48\textwidth]{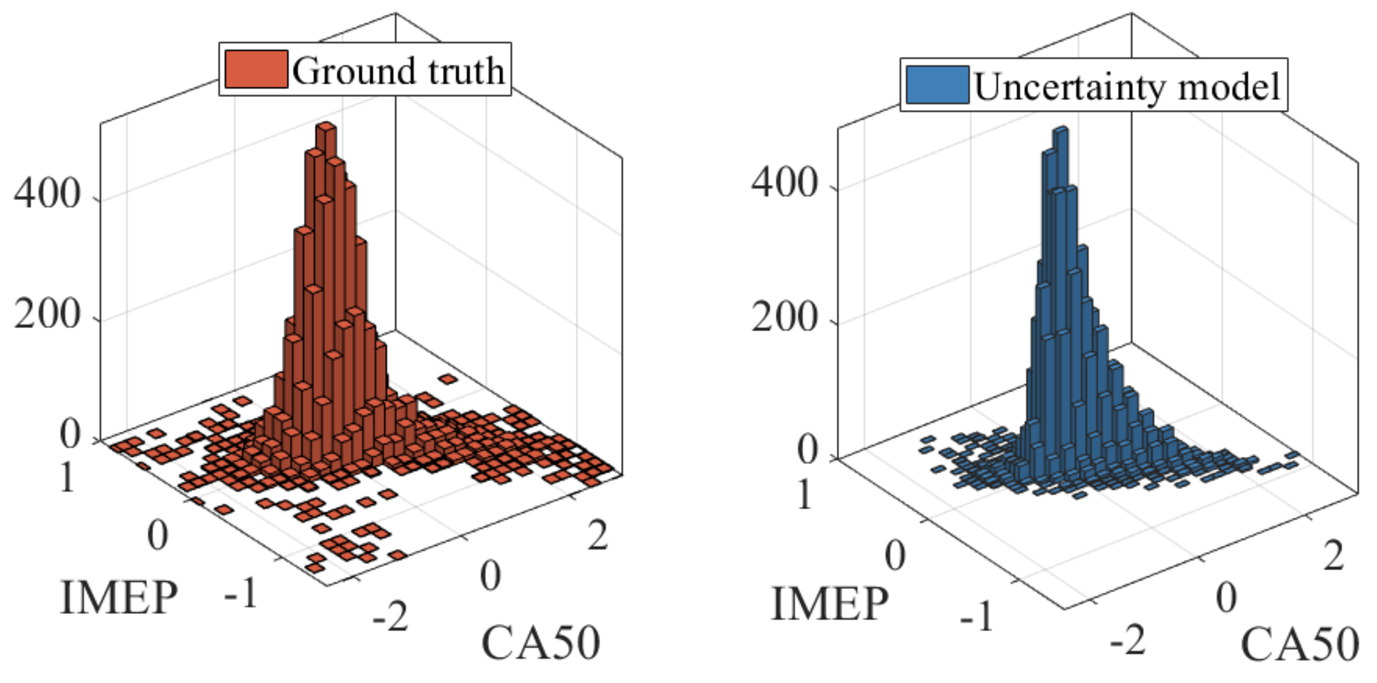}   
\caption{Comparison of the marginal uncertainty  distributions of normalized IMEP and CA50 between the ground truth (left) and the uncertainty model (right), visualized as 3D histograms based on $10,000$ samples.   }  
\label{exp:uncertainty2}  
\end{figure}   

Figure~\ref{exp:uncertainty_condition} illustrates the conditional uncertainty distributions of normalized CA50 and IMEP under two operating conditions. The upper plots correspond to a stable operating regime of the research cylinder with CA50 $= \SI{7}{\degreeCA}$ and IMEP$= \SI{3}{\bar}$, while the lower plot represents a boundary condition with CA50 $= \SI{2}{\degreeCA}$, IMEP $= \SI{2.2}{\bar}$, where combustion becomes more sensitive and less stable. In both cases, $10000$ samples were generated from the uncertainty model, conditioned on the respective operating points. In the stable regime, the distribution is compact and almost symmetric.
Under the boundary condition, the generated samples still capture the core trend of the empirical distribution, though some deviation and misalignment occur in the extremes.\par

\begin{figure}[htb]
\centering
\includegraphics[width=0.48\textwidth]{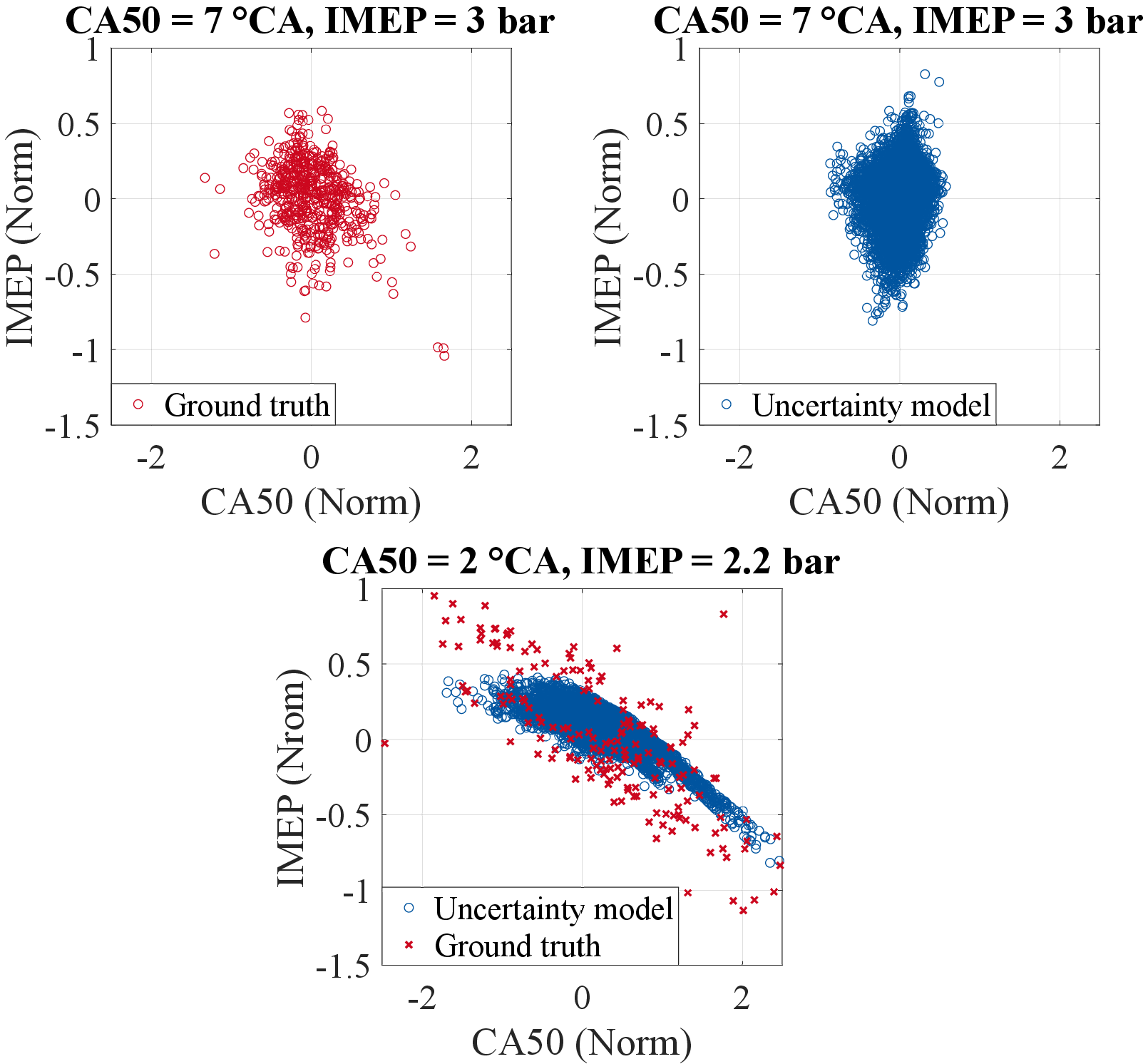}   
\caption{Conditional uncertainty distribution of CA50 and IMEP under two operating conditions: (1) CA50 = \SI{7}{\degreeCA}, IMEP = \SI{3}{\bar}; (2) CA50 = \SI{2}{\degreeCA}, IMEP = \SI{2.2}{\bar}. The red dots represent all observed samples gathered near operating condition (2), representing the empirical conditional probability distribution under this condition.   }  
\label{exp:uncertainty_condition}  
\end{figure}

Next, we provide a comparative evaluation of several control strategies, including nonlinear MPC, Gaussian-based SMPC, PCE-based SMPC, and GEM-SMPC, focusing on two key control performance metrics, namely the stabilization of CA50 and the accuracy of IMEP reference tracking. \par
Figure~\ref{exp:ca50} presents the evolution of CA50 over $120$ engine cycles under these four control strategies with $50$ simulation runs per controller. The mean trajectory of CA50 is shown as a bold solid line, while the shaded region reflects the variability across runs. The dashed line denotes the reference setpoint of \SI{7}{\degreeCA}. We divide the $120$ control cycles into four IMEP phases, each consisting of $30$ cycles under different IMEP conditions with $2.8$, $2.2$, $3.2$, and \SI{3.9}{\bar} shown in Figure~\ref{exp:mean}. \par
We first examine the mean value of CA50 in the closed loops under different controllers. Overall, controllers that incorporate a nonlinear uncertainty model in the prediction phase consistently achieve accurate mean tracking of the CA50 setpoint in Figure~\ref{exp:ca50}. Their average responses remain very close to the reference with only minor deviations. Among these, the GEM-SMPC method demonstrates slightly superior performance. In particular, under the \SI{3.9}{\bar} IMEP condition, standard quadratic-cost-based controllers yield a mean CA50 value that slightly underperforms the setpoint. The controllers without a nonlinear uncertainty model, namely the NMPC and Gaussian-based SMPC, exhibit a clear bias, with the actual tracking mean consistently above the CA50 reference. \par
In terms of variability across the $50$ simulation runs, the shaded regions reveal clear differences in robustness among the control strategies. The GEM-SMPC yields the narrowest uncertainty band, indicating the smallest CA50 variation and strong cycle-to-cycle consistency. The PC-based SMPC exhibits slightly higher variability but still maintains a tightly controlled envelope around the mean trajectory. However, near the IMEP boundary at \SI{3.9}{\bar}, its variation increases noticeably. The Gaussian-based SMPC also produces a relatively narrow band, but it still exhibits considerable variation, particularly in the mid-range stable IMEP condition around \SI{3.2}{\bar}. In contrast, the nonlinear MPC shows the greatest variability, with high cycle-to-cycle fluctuations. \par

\begin{figure}[htb]
\centering
\includegraphics[width=0.48\textwidth]{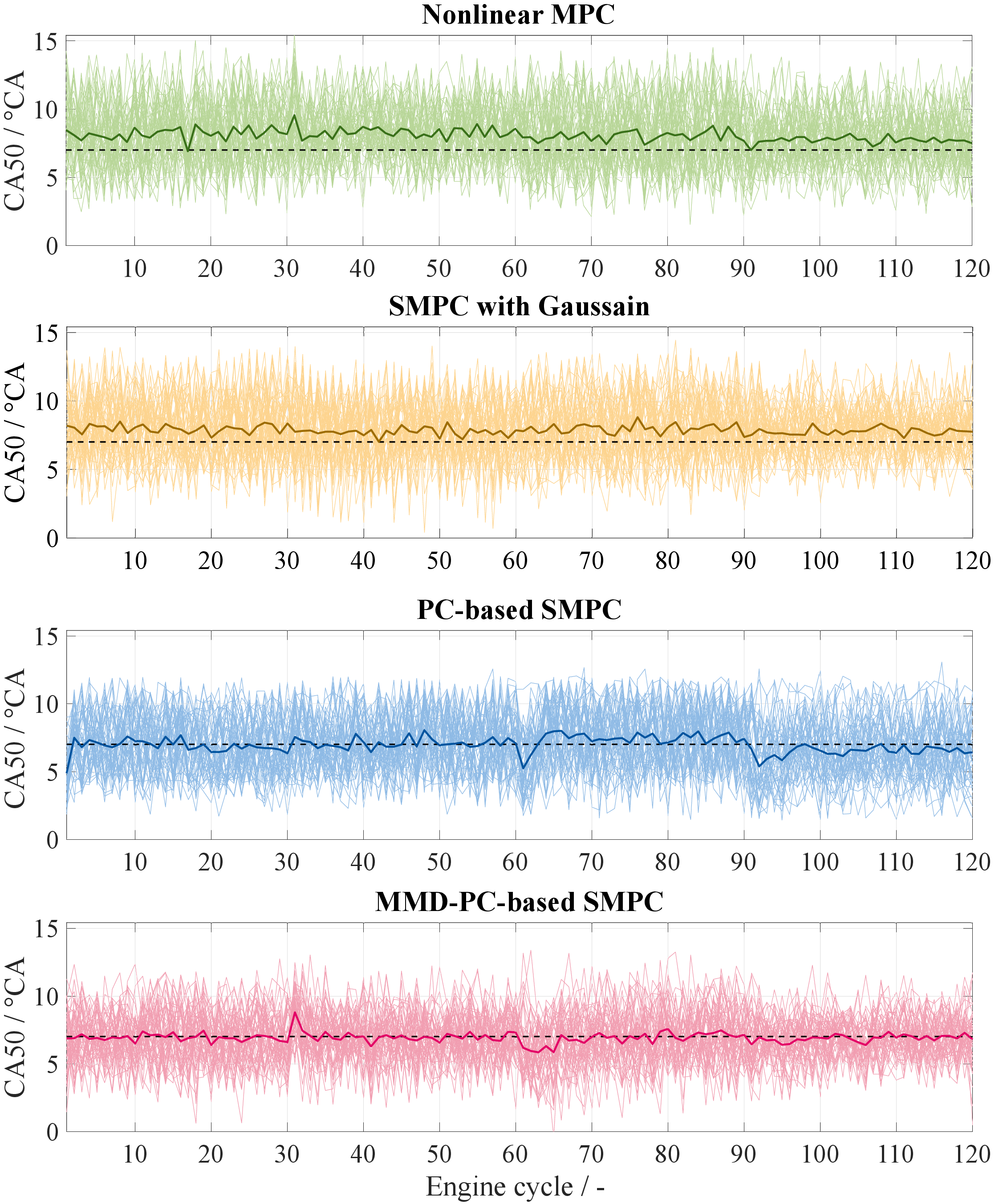}   
\caption{Comparison of CA50 trajectories over $120$ engine cycles under different control strategies: nonlinear MPC, Gaussian-based SMPC, PC-based SMPC, and GEM-SMPC. Each plot shows results from 50 simulation runs. The bold solid line represents the mean CA50, while the dashed line indicates the CA50 set point. The shaded area reflects the variability across runs, highlighting both control performance and combustion stability.  }  
\label{exp:ca50}  
\end{figure}   
\begin{table}[htb]
    \centering
    \caption{Comparison of CA50 Variance Under Different Control Strategies.}
    \begin{tabular}{|l|c|c|c|}
       \hline
         & Variance & Ratio ($\ge13$) &  Ratio ($\le2$)  \\
        \hline
                Nonlinear  & 5.6541  & 2.17 $\%$  & 0.017 $\%$  \\
        \hline
                Gaussian  & 5.1725  & 1.02 $\%$  & 0.32 $\%$  \\
        \hline
                PC  & 5.2215  & 0.017 $\%$  & 0.68 $\%$  \\
        \hline
        GEM  & 3.7025  & 0.1 $\%$  & 0.42 $\%$  \\
        \hline

    \end{tabular}
    \label{teble:variance}
\end{table}

To further quantify the CA50 performance observed in Figure~\ref{exp:ca50}, Table~\ref{teble:variance} summarizes the CA50 variance and the ratios of extreme deviations under different control strategies. Among all methods, GEM-SMPC achieves the lowest variance with $3.7$, indicating the most consistent combustion phasing. The PC-based SMPC and Gaussian-based SMPC have comparable variances, while the nonlinear MPC shows the highest variance with $5.65$, consistent with the wide shaded area observed in Figure~\ref{exp:ca50}. In terms of extreme CA50 constraint violations, both PCE-based methods demonstrate strong performance in maintaining the upper constraint, with violation rates below \SI{0.1}{\percent}. Although the PC-based methods perform slightly worse than the non-PC-based methods in enforcing the lower constraint, they still maintain acceptable performance. When comparing the two PC-based controllers, there is no considerable difference in terms of constraint violations caused by outliers.  \par

\begin{figure}[htb]
\centering
\includegraphics[width=0.48\textwidth]{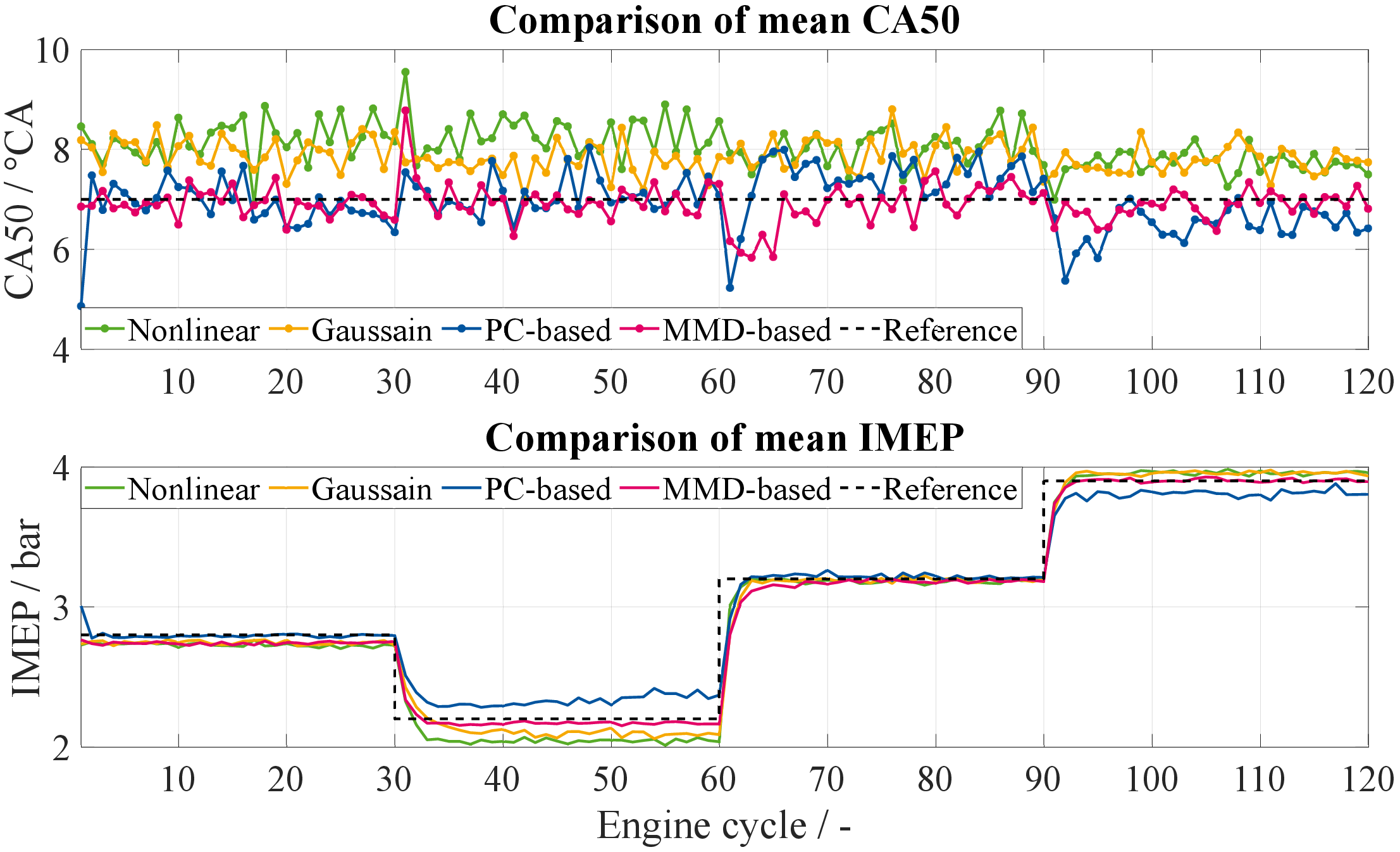}   
\caption{ Comparison of mean CA50 (top) and mean IMEP (bottom) trajectories under different control strategies over $120$ engine cycles, including nonlinear MPC, Gaussian-baed SMPC, PC-based SMPC, and GEM-SMPC. The dashed black line indicates the reference trajectory. Results show differences in combustion phasing accuracy and load tracking performance across the methods. }  
\label{exp:mean}  
\end{figure} 

\begin{table}[htb]
    \centering
    \caption{Comparison of Mean CA50 and IMEP Errors Under Different Control Strategies.}
    \begin{tabular}{|l|c|c|c|c|}
       \hline
               & Nonlinear & Gaussian & PC& GEM   \\
        \hline
       RMSE-CA50 &  1.1410 & 0.9141& 0.5632  & 0.3738     \\
        \hline
       RMSE-IMEP & 0.0933 & 0.0788 & 0.0964    & 0.0580     \\
        \hline
    \end{tabular}
    \label{teble:mean}
\end{table}
Figure~\ref{exp:mean} and Table~\ref{teble:mean} together provide a  comparison of the control accuracy across different strategies in terms of both mean CA50 and mean IMEP. The top subplot of Figure~\ref{exp:mean} shows the same results as in Figure~\ref{exp:ca50}, where the GEM-SMPC consistently tracks the CA50 reference trajectory with minimal deviation, demonstrating superior mean CA50 tracking accuracy. Table~\ref{teble:mean} confirms these observations quantitatively, where GEM-SMPC achieves the lowest CA50 RMSE $0.3738$ with a \SI{34}{\percent} reduction compared to the PC-based MPC and over \SI{60}{\percent} improvement relative to Gaussian-based and Nonlinear MPC. In the bottom subplot of Figure~\ref{exp:mean}, all controllers are generally capable of following the step reference changes in IMEP. However, GEM-SMPC outperforms the others by delivering the smoothest and most accurate tracking of the reference profile across all phases. The other methods, particularly PC-based MPC and nonlinear MPC, show slightly lower tracking accuracy at the boundary conditions, like \SI{2.2}{bar}. Table~\ref{teble:mean} reflects this in the lowest RMSE-IMEP value for GEM-SMPC with $0.058$, followed by Gaussian-based SMPC with $0.0788$. The PC-based MPC and Nonlinear controllers trail behind with higher errors.
\subsection{Discussion}
As shown in Figure~\ref{exp:uncertainty}, Figure~\ref{exp:uncertainty2}, and Figure~\ref{exp:uncertainty_condition}, the learned uncertainty model effectively replicates both marginal and conditional distributions under a variety of operating conditions. It successfully reflects important characteristics such as negative CA50–IMEP correlation and higher-order moment behavior, indicating that the model is capable of approximating complex, real-world uncertainty structures. \par
However, the model underestimates tail behavior and outliers, likely due to insufficient training data exposure, which is a known challenge in data-driven modeling. While this limits the model’s coverage of extreme events, it supports the SMPC design goal of managing constraint violations under uncertainty within a predefined probability threshold while avoiding overly conservative solutions. \par
Moreover, conditional distributions show that the model can adjust to operating regimes, yielding narrow, unimodal distributions in stable regimes and broader, more irregular patterns near boundaries. This adaptability enhances its practical applicability in predictive control under various conditions. \par
As shown in Figure~\ref{exp:ca50}, Figure~\ref{exp:mean}, and the corresponding tables, the control comparison reveals clear benefits of using learned nonlinear uncertainty models. Controllers that include such models, particularly GEM-SMPC, demonstrate improved tracking accuracy and robustness. \par
In terms of CA50 performance, the biases observed in NMPC and Gaussian-based SMPC highlight the limitations of neglecting asymmetry and higher-order characteristics of the underlying uncertainty distribution. These simplifications can lead to systematic control errors under non-trivial uncertainty conditions. The superior performance of the GEM-MPC stems from its use of a non-parametric distance metric that enables more accurate characterization of the true distribution. This supports more precise and reliable control, especially under boundary IMEP conditions where traditional methods degrade. \par
The increased variability of CA50 under the PC-based SMPC near boundary conditions suggests that although effective in stable regimes, this approach could benefit from further refinement in handling distributional shifts in the objective function. Similarly, the Gaussian-based SMPC's limited expressiveness leads to a mismatch of uncertainty impact, reducing the robustness against uncertainty. The nonlinear MPC, with no probabilistic modeling of uncertainty, fails to suppress cycle-to-cycle variations effectively, especially under less stable conditions. The results confirm the importance of stochastic modeling in achieving reliable combustion control under real-world disturbances. \par
In terms of IMEP tracking, the results suggest that the MMD objective also improves IMEP tracking performance, especially under boundary operating conditions. While the pure PCE-based method shows strong capability in enforcing constraints, it exhibits a more conservative response in mean tracking, due to the limitations of the quadratic objective function, which may not adequately reflect nonlinear performance trade-offs. In contrast, the MMD-based formulation directly penalizes distributional deviations, allowing it to better capture nonlinear system behavior and deliver balanced control performance across both constraint holding and load tracking.\par
In summary, GEM-MPC's ability to minimize variance while still respecting constraints highlights its robustness against both advanced and delayed combustion phases. This confirms its potential for improving safety and operational consistency in the presence of nontrivial uncertainty.\par
However, the overall performance of GEM-MPC still depends on the quality and representativeness of the training data used to learn the uncertainty model. In scenarios where data is limited or fails to capture critical events, the GEM-MPC framework may exhibit reduced robustness. Additionally, the sample-based framework incurs a high computational cost, with each GEM-MPC problem requiring an average of $1.25$ seconds to solve, posing a critical barrier to real-time applicability. Future work will explore deployment on real engine platforms by investigating learning-based approaches to approximate GEM-SMPC policies for real-time implementation. 
\section{Conclusion}
This work presents a structured GEM-SMPC framework that integrates a data-driven generative model for conditional residual uncertainty representation, a modified PCE technique for efficient and tractable propagation of additive uncertainty through nonlinear dynamics over the prediction horizon, and an MMD-based objective function for aligning predicted state distributions with desired targets. Together, these components enable robust control of a nonlinear system under complex, state-dependent stochastic uncertainty. \par
We further conduct a comparative study of multiple control strategies for stochastic combustion phasing and load control in the HCCI engine. By explicitly accounting for the effects of nonlinear uncertainty in the prediction and capturing its higher-order distributional characteristics, GEM-MPC improves both combustion stability and tracking accuracy without sacrificing constraint satisfaction compared to the other baseline controllers. Although demonstrated in the context of HCCI combustion, the proposed GEM-SMPC framework is generalizable and transferable to other applications involving stochastic uncertainty.


\section*{Acknowledgments}
The authors gratefully acknowledge funding from the German Research Foundation (DFG) under project number 277012063, within the framework of Research Unit 2401, entitled "Optimization-Based Multiscale Control of Low-Temperature Combustion Engines". Their generous support made this work possible.


%

\bibliographystyle{IEEEtran}
\bibliography{reference_tcst}

\vfill

\end{document}